\def\ps@pprintTitle{%
 \let\@oddhead\@empty
 \let\@evenhead\@empty
 \def\@oddfoot{}%
 \let\@evenfoot\@oddfoot}
\newtheorem{Cor}{Corollary}
\newtheorem{Lem}{Lemma}
\newtheorem{Def}{Definition}
\newtheorem{Thm}{Theorem}
\newtheorem{Ex}{Example}
\newcommand{\Z}{{\mathbb Z}}
\newcommand{\ZZ}{\Z_{p^r}}
\newcommand{\ZZN}{\Z_{p^r}^n}
\newcommand{\ZZD}{\Z_{p^r}[D]}
\newcommand{\ZZND}{\Z_{p^r}^n[D]}
\newcommand{\C}{{\mathcal C}}
\newcommand{\wt}{\mathrm{wt}}
\newcommand{\ie}{\emph{i.e.}}
\newcommand{\KER}{\mbox{\rm{Ker}}\,}
\newcommand{\rank}{\mbox{rank}\,}
\begin{document}

\begin{frontmatter}

\title{List decoding of Convolutional Codes over integer residue rings}
\author{Julia Lieb, Diego Napp and Raquel Pinto}

\address{Julia Lieb: Department of Mathematics, University of Zurich, Switzerland, e-mail: julia.lieb@math.uzh.ch.}

\address{Diego Napp: Department of Mathematics, University of Alicante, Spain, e-mail: diego.napp@ua.es\\
%
}

\address{Raquel Pinto: CIDMA - Center for Research and Development in Mathematics and Applications, Department of Mathematics, University of Aveiro, Portugal, e-mail: raquel@ua.pt
%
}

\begin{abstract}
A convolutional code $\C$ over $\ZZ[D]$ is a $\ZZ[D]$-submodule of $\ZZN[D]$ where $\ZZ[D]$ stands for the ring of polynomials with coefficients in $\ZZ$. In this paper, we study the list decoding problem of these codes when the transmission is performed over an erasure channel, that is, we study how much information one can recover from a  codeword $w\in \C$ when some of its coefficients have been erased. We do that using the $p$-adic expansion of $w$ and particular representations of the parity-check polynomial matrix of the code. From these matrix polynomial representations we recursively select certain equations that $w$ must satisfy and have only coefficients in the field $p^{r-1}\ZZ$. We exploit the natural block Toeplitz structure of the sliding parity-check matrix to derive a step by step methodology to obtain a list of possible codewords for a given corrupted codeword $w$, that is, a list with the closest codewords to $w$. 
\end{abstract}

\begin{keyword}
Convolutional codes, finite rings, erasure channel

\end{keyword}

\end{frontmatter}

\section{Introduction}


Convolutional codes form a fundamental class of linear codes that are widely used in applications (see  also the related notion of sequential cellular automata \cite{CaFu17}) . They are typically described by means of a generator matrix, which is a polynomial matrix with coefficients in a finite field or a finite ring, depending on the application. Yet, the mathematical theory of convolutional codes over finite fields is much developed and has produced many sophisticated classes of codes. On the other hand, very little is known about concrete optimal constructions of convolutional codes over finite rings. In any case, the decoding of these codes is, in general, not easy. Probably the most prominent decoding algorithm is the Viterbi algorithm but its use is limited as its complexity grows exponentially with the size of the memory of the code. However, in \cite{to12} it was shown that the decoding of convolutional codes over finite fields requires only linear algebra when they are used over the erasure channel, \ie, when the positions of the errors are known. Despite the fact that convolutional codes that possess optimal erasure correcting capabilities require large finite fields, the results in \cite{to12} allow to implement these codes in many practical situations and therefore attracted the interest of many researchers, see for instance \cite{ma15a} and references therein.


%

Following this thread of research and aiming to extend these results over finite fields to the context of finite rings, we consider in this paper convolutional codes $\C$ over $\ZZ[D]$ and study the erasure correcting capabilities of these codes over the erasure channel. Convolutional codes over $\ZZ [D]$ were introduced by Massey et al. in \cite{massey89} and recently several relevant results were presented \cite{Interlando97,NaPiTo17,OuNaPiTo19,el13}.

In particular, our goal is to retrieve as much information as possible from the received corrupted vector. The decoder proposed in this work is a maximum likelihood algebraic decoder and follows succinctly two main steps. Firstly, it searches for unique decoding, \ie, when there exists a unique most likely word, then, the decoder outputs such a word. When this is not possible the algorithm performs a list decoding algorithm, \ie, it computes a complete list of the most likely codewords for a given corrupted codeword.

For this problem, we shall use the parity-check matrix $H(D)$ of $\C$ in a particular form. Then, the number of independent columns of specific submatrices of $H(D)$ will determine the size of the list of possible codewords in the algorithm. Considering the erasures as unknowns to-be-determined, the decoding  problem treated here amounts to solving a system of linear equations over $\ZZ$. The idea we used in this work is to multiply a selected subset of these equations by a power of $p$ in such a way that we obtain equations with coefficients in  the field $p^{r-1}\ZZ$, isomorphic to $\mathbb{Z}_{p}$, and therefore we can easily compute the unknowns. Once we know  some of the coefficients we can select another set of equations and apply the same ideas to these equations to recover another set of erased symbols. In this way we develop a systematic recursive procedure to recover all possible erasures that could have occurred. This, in turn, provides with a list with the closest codewords to the received information vector.


%
The outline of this paper is as follows. In Section 2, we present basic results on convolutional codes over the finite ring $\ZZ$, in particular about their parity-check matrices, which are important for decoding over the erasure channel. In Section 3, we present our erasure decoding algorithm for convolutional codes over $\ZZ$ and illustrate it with an example. Finally, we conclude with analyzing the complexity of our algorithm in Section 4.

\section{Preliminary results}

In this section we present the elementary background required in the paper. Let $\ZZD$ denote the ring of polynomials with coefficients in $\ZZ$ and let $\mathcal{A} =\{0,1,2, \dots, p-1\}$ be the set of \textbf{digits}. We say that $v(D)$ has \textbf{order} $s$, denoted by ord$(v(D))=s$, if $p^{s-1}v(D) \neq 0$ and $p^{s-1}v(D) \in p^{r-1}\ZZ^\ell[D]$. Every element in $v(D)\in \ZZ^\ell[D] $ admits a unique $p$-adic expansion as $v(D)= a_0(D) + a_1(D) p + \cdots + a_{r-1}(D)p^{r-1}$, with $a_i(D)\in \mathcal{A}[D]$, ord$(a_i(D))=r-1$ and $i=0,1,\dots, r-1$. We shall extensively use that $p^{r-1}\ZZ$ is isomorphic to the field $\Z_p$ in our algorithms.

\begin{Def} \cite{QuNaPiTo17,fo70,el13}
A {\bf convolutional code $\mathcal{C}$} of length $n$ is a $\ZZ[D]$-submodule of  $\ZZN[D]$. A polynomial matrix $ G(D) \in \ZZ^{k \times n}[D]$ such that
\begin{eqnarray*}
{\mathcal C} & =  \left\{ G(D)^T u(D) \in \ZZN[D] :\, u(D) \in \Z^{ k}_{p^r}[D]\right\}
\end{eqnarray*}
is called {\bf generator matrix} of the code.
\end{Def}

A polynomial matrix $H(D)$ is a {\bf parity-check matrix} (or
syndrome former) of a convolutional code $\C$ if $\C= \KER_{\mathbb Z_{p^r}[D] } H(D)$, \ie ,
for every $w(D)\in \ZZN[D]$,

\begin{equation}\label{eq:01}
w(D) \in \C \Leftrightarrow H(D)w(D)=0.
\end{equation}

\begin{Def}
Let $R$ be a ring with identity and $U(D)\in R[D]^{n\times n}$. Then $U(D)$ is called {\bf unimodular} if there exists $V(D)\in R[D]^{n\times n}$ such that $U(D)V(D)=V(D)U(D)=I_n$.
\end{Def}

For any matrix $A$ with entries in $\mathbb Z_{p^r}$ or $\mathbb Z_{p^r}[D]$, we denote by $[A]_p$ the (componentwise) projection of $A$ into $\mathbb Z_p$.

\begin{Lem}\cite{ku07}\label{1}
Let $U(D)\in\mathbb Z_{p^r}[D]^{n\times n}$. Then $U(D)$ is unimodular (over $\mathbb Z_{p^r}[D]$) if and only if $[U(D)]_p$ is unimodular (over $\mathbb Z_p[D]$).
\end{Lem}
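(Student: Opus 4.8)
The plan is to prove both directions via reduction modulo $p$ and a lifting/determinant argument.

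First, the easy direction. Suppose $U(D) \in \mathbb{Z}_{p^r}[D]^{n\times n}$ is unimodular, so there is $V(D)$ with $U(D)V(D) = V(D)U(D) = I_n$. Applying the ring homomorphism $[\,\cdot\,]_p : \mathbb{Z}_{p^r}[D] \to \mathbb{Z}_p[D]$ componentwise (it is a ring homomorphism, hence commutes with matrix multiplication and sends $I_n$ to $I_n$), we immediately get $[U(D)]_p [V(D)]_p = [V(D)]_p [U(D)]_p = I_n$ over $\mathbb{Z}_p[D]$, so $[U(D)]_p$ is unimodular. I would spell this out in one or two lines.

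For the converse, suppose $[U(D)]_p$ is unimodular over the polynomial ring $\mathbb{Z}_p[D]$, where $\mathbb{Z}_p$ is a field. Over a field, a square polynomial matrix is unimodular precisely when its determinant is a nonzero constant in $\mathbb{Z}_p^*$; equivalently $\det [U(D)]_p \in \mathbb{Z}_p \setminus \{0\}$. Since $\det$ commutes with the projection, $[\det U(D)]_p = \det [U(D)]_p$ is a nonzero element of $\mathbb{Z}_p$. I then want to conclude that $\det U(D) \in \mathbb{Z}_{p^r}[D]$ is a \emph{unit} in $\mathbb{Z}_{p^r}[D]$. Write $\det U(D) = \sum_j c_j D^j$ with $c_j \in \mathbb{Z}_{p^r}$. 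From $[\det U(D)]_p \ne 0$ being a constant, the constant term $c_0$ projects to a nonzero element of $\mathbb{Z}_p$, hence $c_0 \notin p\mathbb{Z}_{p^r}$, so $c_0$ is a unit in $\mathbb{Z}_{p^r}$; and every $c_j$ for $j \ge 1$ lies in $p\mathbb{Z}_{p^r}$, hence is nilpotent. A standard fact about commutative rings is that $\sum_j c_j D^j$ is a unit in $R[D]$ iff $c_0 \in R^*$ and $c_1, \dots$ are nilpotent; applying this with $R = \mathbb{Z}_{p^r}$ shows $\det U(D)$ is a unit of $\mathbb{Z}_{p^r}[D]$. Finally, a square matrix over a commutative ring is invertible iff its determinant is a unit (via the adjugate formula $U(D)\,\mathrm{adj}(U(D)) = \det U(D) \cdot I_n$), so $V(D) = (\det U(D))^{-1}\mathrm{adj}(U(D))$ is a polynomial inverse, and $U(D)$ is unimodular.

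The main obstacle is the converse direction, and specifically getting from ``$\det[U(D)]_p$ is a nonzero constant'' to ``$\det U(D)$ is a unit in $\mathbb{Z}_{p^r}[D]$'': one must handle the fact that the higher-degree coefficients of $\det U(D)$ need not vanish but only reduce to $0$ mod $p$, and invoke the nilpotent-coefficients characterization of units in a polynomial ring over a ring with nilpotents. Alternatively, and perhaps cleaner for the paper, I could avoid determinants entirely and argue by lifting: lift a $\mathbb{Z}_p[D]$-inverse $W_0(D)$ of $[U(D)]_p$ to an arbitrary $W(D) \in \mathbb{Z}_{p^r}[D]^{n\times n}$; then $U(D)W(D) = I_n + pN(D)$ for some polynomial matrix $N(D)$, and since $p$ is nilpotent in $\mathbb{Z}_{p^r}$ the matrix $pN(D)$ is nilpotent in $\mathbb{Z}_{p^r}[D]^{n \times n}$, so $I_n + pN(D)$ is invertible with inverse $\sum_{i=0}^{r-1}(-pN(D))^i$; composing gives a right inverse of $U(D)$, and symmetrically a left inverse, hence $U(D)$ is unimodular. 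I would present the determinant argument as the primary proof since it is shortest, and the characterization of unimodular matrices over a field as those with constant nonzero determinant is classical.
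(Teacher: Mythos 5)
The paper does not prove this lemma at all; it is quoted from the reference \cite{ku07}, so there is no in-text argument to compare yours against. That said, your proposal is a correct and self-contained proof. The forward direction via the ring homomorphism $[\,\cdot\,]_p$ is immediate, and your converse is sound: $[\det U(D)]_p=\det[U(D)]_p$ is a nonzero constant of $\mathbb Z_p$, so the constant coefficient of $\det U(D)$ lies outside the maximal ideal $p\mathbb Z_{p^r}$ and is a unit, while all higher coefficients lie in $p\mathbb Z_{p^r}$ and are nilpotent; the standard characterization of units in $R[D]$ for commutative $R$ then makes $\det U(D)$ a unit of $\mathbb Z_{p^r}[D]$, and the adjugate formula finishes. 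Your alternative lifting argument (lift an inverse of $[U(D)]_p$, get $U(D)W(D)=I_n+pN(D)$ with $pN(D)$ nilpotent since $p^r=0$, invert by the finite geometric series, and match left and right inverses) is equally valid and arguably cleaner, since it needs neither the determinant criterion over a field nor the nilpotent-coefficient characterization of units. Either version would serve as a proof of the cited result.
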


\begin{Def}
Let $\mathbb F$ be a finite field.
A polynomial matrix $P(D)\in\mathbb F[D]^{k\times n}$ with $k\leq n$ is {\bf left
prime} if in all factorizations
$P(D) = \Delta(D)\bar{P}(D)$ with $\Delta(D)\in\mathbb F[D]^{k\times k}$ and $\bar{P}(D)\in\mathbb F[D]^{k\times n}$,
the left factor $\Delta(D)$ is unimodular.
\end{Def}

\begin{Lem}\label{2}\cite{ka80}
Let $P(D)\in\mathbb Z_p[D]^{k\times n}$ with $k\leq n$. Then, the following conditions are equivalent:
\begin{itemize}
\item[(i)]$P(D)$ is left prime.
\item[(ii)]There exists $N(D)\in\mathbb Z_p[D]^{(n-k)\times k}$ such that $\begin{pmatrix}P(D)\\ N(D)\end{pmatrix}$ is unimodular.
\item[(iii)] $u(D)P(D)\in\mathbb Z^n_{p^r}[D]\Rightarrow u(D)\in\mathbb Z^k_{p^r}[D]$, for all $u(D)\in\mathbb Z^k_{p^r}(D)$, where $\mathbb Z_{p^r}(D)$ denotes the ring of rational functions with coefficients in $\mathbb Z_{p^r}$.
\end{itemize}   
\end{Lem}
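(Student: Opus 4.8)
The plan is to prove the cycle of implications $(i) \Rightarrow (ii) \Rightarrow (iii) \Rightarrow (i)$, working over the principal ideal domain $\mathbb{Z}_p[D]$ and exploiting the Smith normal form, which is available there.

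First I would establish $(i) \Rightarrow (ii)$. Since $\mathbb{Z}_p[D]$ is a PID, $P(D)$ admits a Smith form $P(D) = A(D) \,\mathrm{diag}(d_1(D),\dots,d_k(D)) \,[B(D) \mid 0]$ with $A(D) \in \mathbb{Z}_p[D]^{k\times k}$, $B(D) \in \mathbb{Z}_p[D]^{n \times n}$ both unimodular and $d_i(D)$ the invariant factors. Left primeness forces each $d_i(D)$ to be a unit: indeed, writing $\Delta(D) = A(D)\,\mathrm{diag}(d_1,\dots,d_k)$ exhibits a factorization $P(D) = \Delta(D)\bar P(D)$ with $\bar P(D)$ the first $k$ rows of $B(D)^{-1}$ times a unimodular matrix — actually $\bar P(D) = [I_k \mid 0] B(D)$, which is left prime — so $\Delta(D)$ must be unimodular, hence $\DET \Delta(D) = \DET A(D)\prod d_i(D)$ is a unit, forcing every $d_i(D)$ to be a unit. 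Then $P(D)$ itself is unimodular-completable: complete $[I_k \mid 0]$ trivially by $[0 \mid I_{n-k}]$ to the identity, conjugate back by $B(D)$ and absorb $A(D)\,\mathrm{diag}(d_i)$, obtaining the desired $N(D)$.

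Next, $(ii) \Rightarrow (iii)$: if $\left(\begin{smallmatrix}P(D)\\ N(D)\end{smallmatrix}\right) =: Q(D)$ is unimodular over $\mathbb{Z}_p[D]$, let $Q(D)^{-1} = [R_1(D) \mid R_2(D)]$ with $R_1(D) \in \mathbb{Z}_p[D]^{n\times k}$. Given $u(D) \in \mathbb{Z}_{p^r}(D)^k$ with $u(D)P(D) \in \mathbb{Z}_{p^r}^n[D]$, I would form $[u(D)\mid 0]Q(D) = u(D)P(D) \in \mathbb{Z}_{p^r}^n[D]$ and multiply on the right by $R_1(D)$ to recover $u(D) = (u(D)P(D))R_1(D) \in \mathbb{Z}_{p^r}^k[D]$, since the product of a polynomial row vector with a polynomial matrix is polynomial. (Here one uses that $\mathbb{Z}_{p^r}[D] \subseteq \mathbb{Z}_{p^r}(D)$ and that unimodularity over $\mathbb{Z}_p[D]$ means over the same ring in which the vectors' polynomial parts live.) Finally, $(iii) \Rightarrow (i)$: suppose $P(D) = \Delta(D)\bar P(D)$ with $\Delta(D)$ square; over the field of fractions $\Delta(D)$ is invertible (if $\DET \Delta(D)=0$ the rows of $P(D)$ would be dependent over $\mathbb{Z}_p(D)$, contradicting the existence of the completion one gets from a nontrivial use of $(iii)$, or more directly one reduces to the full-row-rank case first). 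Writing any row combination $\lambda(D)\bar P(D)$ for $\lambda(D) \in \mathbb{Z}_p[D]^k$, we have $\lambda(D)\Delta(D)^{-1}P(D) = \lambda(D)\bar P(D) \in \mathbb{Z}_{p^r}^n[D]$, so $(iii)$ applied to $u(D) = \lambda(D)\Delta(D)^{-1} \in \mathbb{Z}_{p^r}(D)^k$ yields $\lambda(D)\Delta(D)^{-1} \in \mathbb{Z}_{p^r}^k[D]$ for every $\lambda(D)$; taking $\lambda(D)$ to run over the standard basis vectors shows $\Delta(D)^{-1}$ has polynomial entries, so $\Delta(D)$ is unimodular.

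The main obstacle I expect is handling the rank-deficient edge cases cleanly in $(iii) \Rightarrow (i)$ and $(i) \Rightarrow (ii)$ — specifically justifying that left primeness already forces full row rank of $P(D)$ over $\mathbb{Z}_p(D)$, so that $\Delta(D)$ is genuinely invertible over the fraction field and the Smith-form invariant factors are nonzero; once full row rank is in hand, the PID structure theory of $\mathbb{Z}_p[D]$ makes everything mechanical. Since this is a known result (cited as \cite{ka80}), I would in the write-up either invoke the PID theory directly or simply refer to the cited source for the classical field-coefficient statement and concentrate the exposition on the form in which it is used later in the paper.
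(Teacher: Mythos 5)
The paper does not actually prove this lemma: it is quoted verbatim from \cite{ka80} (the classical characterization of left primeness of polynomial matrices over a field), so there is no internal proof to compare against. Your Smith-normal-form argument over the PID $\mathbb{Z}_p[D]$ is the standard textbook proof of exactly this result, and the cycle $(i)\Rightarrow(ii)\Rightarrow(iii)\Rightarrow(i)$ is sound: in $(i)\Rightarrow(ii)$ the rank-deficient case takes care of itself, since a zero invariant factor already makes the left factor $\Delta(D)=A(D)\,\mathrm{diag}(d_1,\dots,d_k)$ non-unimodular and contradicts $(i)$ directly; and the full-row-rank step you worry about in $(iii)\Rightarrow(i)$ closes easily, because a nonzero rational $\lambda(D)$ with $\lambda(D)P(D)=0$ would let you feed $c(D)\lambda(D)$ into $(iii)$ for every rational $c(D)$ and force all such multiples to be polynomial, which is impossible. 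After that, applying $(iii)$ to the rows of $\Delta(D)^{-1}$ is exactly right.

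One point deserves care, though it originates in the statement rather than in your argument. As printed, condition $(iii)$ mixes rings: $P(D)$ has entries in $\mathbb{Z}_p[D]$ while $u(D)$ ranges over $\mathbb{Z}_{p^r}(D)^k$, and there is no ring embedding $\mathbb{Z}_p\hookrightarrow\mathbb{Z}_{p^r}$, so the product $u(D)P(D)$ and your key cancellation $u(D)=\bigl(u(D)P(D)\bigr)R_1(D)$ are only literally valid when everything lives over $\mathbb{Z}_p$: the identity $P(D)R_1(D)=I_k$ holds in $\mathbb{Z}_p[D]$, not for digit lifts to $\mathbb{Z}_{p^r}[D]$ (there it holds only modulo $p$). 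This is how the lemma is actually invoked later (in the proof of Lemma \ref{3} the data are projected to $\mathbb{Z}_p$ before condition $(iii)$ is applied), so your proof establishes the statement that the paper uses; it would be worth saying explicitly that you read $(iii)$ with $\mathbb{Z}_p$ in place of $\mathbb{Z}_{p^r}$, or noting that the subscript $p^r$ there appears to be a typo (as does the size $(n-k)\times k$ of $N(D)$ in $(ii)$, which must be $(n-k)\times n$).
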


The next result follows immediately from Lemma \ref{1} and Lemma \ref{2}.

\begin{Cor}\label{cor}
Let $P(D)\in\mathbb Z_{p^r}[D]^{k\times n}$ with $k\leq n$. Then $[P(D)]_p$ is left prime over $\mathbb Z_p[D]$ if and only if there exists $N(D)\in\mathbb Z_{p^r}[D]^{(n-k)\times k}$ such that $\begin{pmatrix}P(D)\\ N(D)\end{pmatrix}$ is unimodular.
\end{Cor}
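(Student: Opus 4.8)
The plan is to combine Lemma~\ref{1} and Lemma~\ref{2} by passing back and forth between $\mathbb Z_{p^r}[D]$ and its projection onto $\mathbb Z_p[D]$. First I would prove the forward implication: assume $[P(D)]_p$ is left prime over $\mathbb Z_p[D]$. By the equivalence (i)$\Leftrightarrow$(ii) in Lemma~\ref{2}, there exists $\bar N(D)\in\mathbb Z_p[D]^{(n-k)\times k}$ such that the stacked matrix $\begin{pmatrix}[P(D)]_p\\ \bar N(D)\end{pmatrix}\in\mathbb Z_p[D]^{n\times n}$ is unimodular over $\mathbb Z_p[D]$. Now lift $\bar N(D)$ arbitrarily to a matrix $N(D)\in\mathbb Z_{p^r}[D]^{(n-k)\times k}$ (replace each digit coefficient in $\{0,1,\dots,p-1\}$ by the same element of $\mathbb Z_{p^r}$). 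Then $\left[\begin{pmatrix}P(D)\\ N(D)\end{pmatrix}\right]_p=\begin{pmatrix}[P(D)]_p\\ \bar N(D)\end{pmatrix}$, which is unimodular over $\mathbb Z_p[D]$, so by Lemma~\ref{1} the matrix $\begin{pmatrix}P(D)\\ N(D)\end{pmatrix}$ is unimodular over $\mathbb Z_{p^r}[D]$, as required.

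For the converse, suppose there exists $N(D)\in\mathbb Z_{p^r}[D]^{(n-k)\times k}$ with $\begin{pmatrix}P(D)\\ N(D)\end{pmatrix}$ unimodular over $\mathbb Z_{p^r}[D]$. Applying Lemma~\ref{1} in the other direction, $\left[\begin{pmatrix}P(D)\\ N(D)\end{pmatrix}\right]_p=\begin{pmatrix}[P(D)]_p\\ [N(D)]_p\end{pmatrix}$ is unimodular over $\mathbb Z_p[D]$. Hence the matrix $[N(D)]_p\in\mathbb Z_p[D]^{(n-k)\times k}$ witnesses condition (ii) of Lemma~\ref{2} for $[P(D)]_p$, so by (ii)$\Rightarrow$(i) of that lemma $[P(D)]_p$ is left prime over $\mathbb Z_p[D]$. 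This completes both directions.

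The only subtlety — and it is essentially bookkeeping rather than a genuine obstacle — is checking that projection $[\,\cdot\,]_p$ commutes with vertical stacking of matrices and that a lift of a digit matrix projects back to itself; both are immediate from the componentwise definition of $[\,\cdot\,]_p$ and the fact that $\mathbb Z_p=\{0,1,\dots,p-1\}$ sits inside $\mathbb Z_{p^r}$ as a set of coset representatives. I would state the corollary's proof in one or two lines, simply citing ``this follows immediately from Lemma~\ref{1} and Lemma~\ref{2}'' and spelling out the lift-and-project argument above only if the referee asks for detail.
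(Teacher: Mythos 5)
Your proof is correct and is precisely the argument the paper has in mind: the paper gives no proof, stating only that the corollary ``follows immediately from Lemma~\ref{1} and Lemma~\ref{2}'', and your lift-and-project argument is the standard way to make that immediate deduction explicit. (The dimension $(n-k)\times k$ for $N(D)$ in the statement should presumably read $(n-k)\times n$ so that the stacked matrix is square; this typo is in the paper, not in your reasoning.)
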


If $\C$ is a convolutional code that admits a parity-check matrix, then a parity-check matrix of $\C$ can be constructed as follows (for more details see \cite{OuNaPiTo19}): Let $\hat{G}(D)\in\mathbb Z_{p^r}[D]^{\hat{k}\times n}$ be a generator matrix of $\mathcal{C}$ and consider $$\hat{\C}=\{\hat{G}(D)^\top u(D):\ u(D)\in\mathbb Z_{p^r}((D))^{\hat{k}}\}$$
where $\ZZ((D))$ denotes the \index{Ring of Laurent series over $\ZZ$}{\bf ring of Laurent series} over $\ZZ$, \ie, the set of elements of the form
$$
a(D) = \sum_{i=-\infty}^{+\infty}{a_iD^i}
$$
where the coefficients $a_i$ are in $\ZZ$ and only finitely coefficients with negative indices may
be nonzero. Note that $\C=\hat{\C}\cap\mathbb Z_{p^r}[D]^n$. Then, there exists
\begin{align}\label{*}
G(D)=\left[
       \begin{array}{c}
         G_0(D) \\
        pG_1(D) \\
        \vdots \\
       p^{r-1}G_{r-1}(D)
       \end{array}
    \right],\ G_i(D)\in\mathbb Z_{p^r}[D]^{k_i\times n}\ \text{for}\ i=0,\hdots,r-1
\end{align}
with $\mathcal{G}=\left[\begin{matrix}G_0(D)\\ G_1(D)\\ \vdots\\ G_{r-1}(D)\end{matrix}\right]_p$ full row rank over $\mathbb Z_p[D]$ such that
$$\hat{\C}=\{G(D)^\top u(D):\ u(D)\in\mathbb Z_{p^r}((D))^k\},$$
where $k=k_0+k_1+\cdots+k_{r-1}$. A procedure to obtain $G(D)$ in this particular form is described in \cite[Theorem 1]{OuNaPiTo19}.
A parity-check matrix of $\hat{C}$ can be determined as follows Consider the generator matrix $G(D)$ defined in \eqref{*} and let $N(D)\in\mathbb Z_{p^r}[D]^{(n-k)\times n}$ such that
such that
$$\begin{pmatrix}
G_0(D)\\G_1(D)\\ \vdots\\G_{r-1}\\N(D)
\end{pmatrix}$$
is unimodular. Then, there exists $H_i(D)\in\mathbb Z_{p^r}[D]^{l_i\times n}$ where $l_0=n-k$ and $l_i=k_{r-i}$, $i=1,2,\hdots,r-1$, and $L(D)\in\mathbb Z_{p^r}[D]^{k_0\times n}$ such that
\begin{align}\label{b}
\begin{pmatrix}
L(D)\\ H_{r-1}(D)\\H_{r-2}(D)\\ \vdots\\H_1(D)\\ H_0(D)
\end{pmatrix}
[G_0(D)^\top\ G_1(D)^\top\ \hdots\ G_{r-1}(D)^\top\ N(D)^\top]=P(D)
\end{align}
for some
$$P(D)=\left[\begin{matrix}
\gamma_1(D) & & &\\ & \gamma_2(D) & & \\ & & \ddots &\\ & & & \gamma_n(D)
\end{matrix}\right],
$$
where $\gamma_i(D)$ are nonzero polynomials in $\mathbb Z_{p^r}[D]$. Then,
 $$H(D)=\left[
       \begin{array}{c}
         H_0(D) \\
        pH_1(D) \\
        \vdots \\
       p^{r-1}H_{r-1}(D)
       \end{array}
    \right],\ H_i(D)\in\mathbb Z_{p^r}[D]^{l_i\times n},\ l_i=k_{r-i},\ \text{for}\ i=1,\hdots,r-1,\ l_0=n-k$$
with $\mathcal{H}=\left[\begin{matrix}H_0(D)\\ H_1(D)\\ \vdots\\ H_{r-1}(D)\end{matrix}\right]_p$ full row rank over $\mathbb Z_p[D]$, such that
$$
w(D)\in \hat{\C} \Leftrightarrow H(D)w(D)=0.
$$

More details can be found in \cite{OuNaPiTo19}. Then,
$$\hat{\C}=\{w(D)\in\mathbb Z_{p^r}((D))^n:\ H(D)w(D)=0\},$$
and consequently,
$$\C=\KER_{\mathbb Z_{p^r}[D]} H(D),$$
which means that $H(D)$ is a parity-check matrix of $\C$.

However, not all convolutional codes admit a parity-check matrix as the following example shows.

\begin{Ex}\label{ce}
Let $G(D)=\left[\begin{matrix}1+D & 1+D & 1+D\\ 3 & 3 & 0\end{matrix}\right]\in\mathbb Z_9[D]^{2\times 3}$ and $\C$ be the convolutional code with generator matrix $G(D)$. Let us assume that $\C$ admits a parity-check matrix $H(D)$. Then, since $w(D)=[1+D\ \ 1+D\ \ 1+D]^\top\in\C$, it follows that $H(D)w(D)=0$. Then, $(1+D)H(D)\begin{pmatrix} 1\\ 1\\ 1\end{pmatrix}=0$ and therefore, $H(D)\begin{pmatrix} 1\\ 1\\ 1\end{pmatrix}=0$. This means that $\begin{pmatrix} 1\\ 1\\ 1\end{pmatrix}\in\C$, which is not true.
\end{Ex}

However, in the case that a convolutional code does not admit a parity-check matrix, the procedure above constructs a matrix $H(D)$ such that $\C\subset \KER_{\mathbb Z_{p^r}[D]} H(D)$ because $\C\subset\hat{\C}$.

If a convolutional code $\C$ admits a parity-check matrix, it is called {\bf observable} or {\bf non-catastrophic}. The characterization of the non-catastrophic convolutional codes over $\mathbb Z_{p^r}[D]$ is still an open problem. However, for a certain class of convolutional codes, the following lemma characterizes the non-catastrophic codes.

\begin{Lem}\label{3}
Let $\C$ be a convolutional code of length $n$ that admits a generator matrix $G(D)$ of the form
\begin{align}G(D)=\left[
       \begin{array}{c}
         G_0(D) \\
        pG_1(D) \\
        \vdots \\
       p^{r-1}G_{r-1}(D)
       \end{array}
    \right],\ G_i(D)\in\mathbb Z_{p^r}[D]^{k_i\times n}\ \text{for}\ i=0,\hdots,r-1
\end{align}
such that
\begin{align}
\mathcal{G}=\left[\begin{matrix}G_0(D)\\ G_1(D)\\ \vdots\\ G_{r-1}(D)\end{matrix}\right]_p
\end{align}
is full row rank over $\mathbb Z_p[D]$. Then, $\C$ admits a parity-check matrix if and only if $\mathcal{G}$ is left prime over $\mathbb Z_p[D]$.
\end{Lem}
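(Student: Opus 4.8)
The plan is to prove both implications. For the direction that a parity-check matrix exists when $\mathcal{G}$ is left prime, I would essentially invoke the construction sketched in the paragraph following \eqref{*}–\eqref{b}. Indeed, if $\mathcal{G}$ is left prime over $\mathbb Z_p[D]$, then by Corollary \ref{cor} there exists $N(D)\in\mathbb Z_{p^r}[D]^{(n-k)\times n}$ such that the stacked matrix $\bigl(G_0(D)^\top\ \cdots\ G_{r-1}(D)^\top\ N(D)^\top\bigr)^\top$ is unimodular over $\mathbb Z_{p^r}[D]$. From here the cited construction (from \cite{OuNaPiTo19}) produces the matrix $H(D)$ in the layered form of \eqref{b} with $\mathcal{H}$ full row rank, and one checks $\hat{\C}=\KER H(D)$; then intersecting with $\mathbb Z_{p^r}[D]^n$ and using $\C=\hat{\C}\cap\mathbb Z_{p^r}[D]^n$ gives $\C=\KER_{\mathbb Z_{p^r}[D]}H(D)$, so $H(D)$ is a parity-check matrix. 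The work here is mostly bookkeeping and citing \cite{OuNaPiTo19}.

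For the converse — if $\C$ admits a parity-check matrix then $\mathcal{G}$ is left prime — I would argue by contradiction, modelling the argument on Example \ref{ce}. Suppose $\mathcal{G}=[G_0(D)^\top\ \cdots\ G_{r-1}(D)^\top]_p^\top$ is \emph{not} left prime over $\mathbb Z_p[D]$. By Lemma \ref{2}, negating condition (iii), there exists $u(D)\in\mathbb Z_{p^r}(D)^k$ with $u(D)\mathcal{G}\in\mathbb Z_p^n[D]$ but $u(D)\notin\mathbb Z_{p^r}^k[D]$ — more precisely, one wants a nonpolynomial $u(D)$ (with a genuine denominator) such that $u(D)G(D)\in\mathbb Z_{p^r}[D]^n$. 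The key step is to lift this non-primeness of $\mathcal{G}$ over $\mathbb Z_p[D]$ to a relation over $\mathbb Z_{p^r}[D]$ involving the full (layered) $G(D)$: one shows there is a Laurent-series or rational vector $u(D)$, not a polynomial, with $w(D):=G(D)^\top u(D)\in\mathbb Z_{p^r}[D]^n$, hence $w(D)\in\hat{\C}\cap\mathbb Z_{p^r}[D]^n=\C$. If $H(D)$ is a parity-check matrix, $H(D)w(D)=0$; but one can arrange (as in the example, multiplying through by a suitable polynomial $\delta(D)$ that clears the denominator) that $\delta(D)w(D)$ corresponds to an input that \emph{is} polynomial while $w(D)$ itself is not generated by a polynomial input of $G(D)$ — the contradiction being that $w(D)\in\C$ forces $w(D)=G(D)^\top v(D)$ for some polynomial $v(D)$, contradicting the way $w(D)$ was built from a strictly rational $u(D)$ together with the row-rank/left-primeness properties of $\mathcal{G}$.

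The main obstacle is making the converse precise: one must carefully exploit the graded structure \eqref{*} of $G(D)$ (the $p^i$-scaling of the blocks $G_i(D)$) to convert the failure of left primeness of the \emph{projected} matrix $\mathcal{G}$ over $\mathbb Z_p[D]$ into a concrete catastrophic phenomenon over $\mathbb Z_{p^r}[D]$, i.e. a codeword of $\C$ arising from a non-polynomial Laurent input. This is where condition (iii) of Lemma \ref{2} and the fact that $p^{r-1}\mathbb Z_{p^r}\cong\mathbb Z_p$ enter: a witness $u(D)$ to non-primeness mod $p$ must be promoted, level by level in the $p$-adic filtration, to a witness for $G(D)$, and one must check the denominator genuinely survives so that the resulting $w(D)$ cannot lie in the polynomial image of $G(D)^\top$ — yielding $w(D)\in\hat\C\cap\mathbb Z_{p^r}[D]^n=\C$ but $w(D)\notin G(D)^\top\mathbb Z_{p^r}[D]^k$, which contradicts $\C$ being the polynomial row-module of $G(D)$ and hence contradicts the existence of $H(D)$. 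I expect the remaining steps (the unimodular completion, the kernel computation, the intersection with $\mathbb Z_{p^r}[D]^n$) to be routine given the results already stated.
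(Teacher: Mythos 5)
Your overall architecture (forward direction via Corollary \ref{cor} and a unimodular completion; converse via Lemma \ref{2}(iii)) matches the paper's, but both halves of your plan have real gaps. In the forward direction you close the argument by "using $\C=\hat{\C}\cap\mathbb Z_{p^r}[D]^n$". That identity is not something you may invoke: it fails for catastrophic codes (in Example \ref{ce} the vector $(1\ 1\ 1)^\top$ lies in $\hat{\C}\cap\mathbb Z_9[D]^3$ but not in $\C$), and under the left-primeness hypothesis it is essentially \emph{equivalent} to the statement you are proving, so using it is circular. The paper avoids $\hat{\C}$ entirely here: from $H(D)w(D)=0$ it extracts $H_i(D)w(D)=p^{r-i}v_i(D)$ and then uses the Bézout-type identity \eqref{b} to write $w(D)=G(D)^\top\bigl[L(D)w(D)^\top\ v_{r-1}(D)^\top\ \cdots\ v_1(D)^\top\bigr]^\top$, exhibiting an explicit \emph{polynomial} preimage and hence $\KER_{\mathbb Z_{p^r}[D]}H(D)\subseteq\C$ directly. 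You need to supply this (or an equivalent) computation.

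For the converse, your plan is a proof by contradiction whose decisive step --- promoting a mod-$p$ witness of non-primeness "level by level" to a codeword $w(D)\in\C$ that provably has no polynomial preimage under $G(D)^\top$ --- is exactly the part you leave unexecuted, and it is where the difficulty sits: over $\mathbb Z_{p^r}$ preimages under $G(D)^\top$ are highly non-unique (the blocks are scaled by powers of $p$), $\mathbb Z_{p^r}[D]$ is not a domain so cancelling your clearing polynomial $\delta(D)$ from $\delta(D)H(D)w(D)=0$ requires $\delta$ to be a non-zero-divisor, and nothing in your sketch shows the denominator "survives". The paper's argument is direct, not by contradiction, and rests on one trick absent from your proposal: given rational $\bar u(D)$ with $\mathcal{G}^\top\bar u(D)$ polynomial, multiply by $p^{r-1}$ and absorb the diagonal factor $\mathrm{diag}(I_{k_0},pI_{k_1},\dots,p^{r-1}I_{k_{r-1}})$ into the input, so that $p^{r-1}w(D)=G(D)^\top u_1(D)$ with the \emph{graded} generator matrix $G(D)$ itself; then $H(D)p^{r-1}w(D)=0$ forces $p^{r-1}w(D)\in\C$, hence a polynomial input, and Lemma \ref{2}(iii) applied to $\mathcal{G}$ concludes. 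Without this $p^{r-1}$-reduction to the top layer of the filtration, your converse remains a plan rather than a proof.
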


\begin{proof}
Take $k=k_0+k_1+\cdots+k_{r-1}$ and let us assume that $G(D)$ is such that $\mathcal{G}$ is left prime over $\mathbb Z_p[D]$. Then, by Corollary \ref{cor}, there exists $N(D)\in\mathbb Z_{p^r}[D]^{(n-k)\times n}$ such that
$$\begin{pmatrix}
G_0(D)\\G_1(D)\\ \vdots\\G_{r-1}\\N(D)
\end{pmatrix}$$
is unimodular. Then
\begin{align}\label{b}
\begin{pmatrix}
L(D)\\ H_{r-1}(D)\\H_{r-2}(D)\\ \vdots\\H_1(D)\\ H_0(D)
\end{pmatrix}
[G_0(D)^\top\ G_1(D)^\top\ \hdots\ G_{r-1}(D)^\top\ N(D)^\top]=I
\end{align}
for some $H_i(D)\in\mathbb Z_{p^r}[D]^{l_i\times n}$ where $l_0=n-k$ and $l_i=k_{r-i}$, $i=1,2,\hdots,r-1$, and $L(D)\in\mathbb Z_{p^r}[D]^{k_0\times n}$. Let us define
\begin{align*}
H(D)=\begin{pmatrix}
H_0(D)\\ pH_1(D)\\ \vdots\\ p^{r-1}H_{r-1}(D).
\end{pmatrix}
\end{align*}
By Corollary \ref{cor} we have that
\begin{align*}
\mathcal{H}=\left[\begin{matrix}H_0(D)\\H_1(D)\\ \vdots\\H_{r-1}(D)\end{matrix}\right]_p
\end{align*}
is left prime over $\mathbb Z_p[D]$. Expression \eqref{b} also implies that
\begin{align}\label{s}
H(D)G(D)^\top=0.
\end{align}
Let us prove that $\mathcal{C}=\KER_{\mathbb Z_{p^r}[D]} H(D)$. Expression \eqref{s} shows that $\mathcal{C}\subset\KER_{\mathbb Z_{p^r}[D]} H(D)$. To prove the other inclusion, let us consider that $w(D)\in\KER_{\mathbb Z_{p^r}[D]} H(D)$, i.e., $w(D)\in \mathbb Z_{p^r}[D]$ is such that
$$\begin{pmatrix}
H_0(D)\\ pH_1(D)\\ \vdots\\ p^{r-1}H_{r-1}(D)
\end{pmatrix}w(D)=0.$$
Then
$$H_i(D)w(D)=p^{r-i}v_i(D)$$
for some $v_i(D)\in\mathbb Z_{p^r}[D]^{l_i}$, $i=0,1,\hdots,r-1$. Since, by \eqref{b},
$$w(D)=[G_0(D)^\top\ G_1(D)^\top\ \cdots\ G_{r-1}(D)^\top\ N(D)^\top]\left[\begin{matrix}L(D)\\ H_{r-1}(D)\\H_{r-2}(D)\\ \vdots\\H_1(D)\\H_0(D)\end{matrix}\right]w(D),$$
it follows that $$w(D)=[G_0(D)^\top\ G_1(D)^\top\ \cdots\ G_{r-1}(D)^\top\ N(D)^\top]\left[\begin{matrix}L(D)w(d)\\ pv_{r-1}(D)\\p^2v_{r-2}(D)\\ \vdots\\p^{r-1}v_1(D)\\0\end{matrix}\right]=G(D)^\top\left[\begin{matrix}L(D)w(D)\\ v_{r-1}(D)\\v_{r-2}(D)\\ \vdots\\v_1(D)\end{matrix}\right],$$
which means that $w(D)\in\mathcal{C}$. Thus, we conclude that $\mathcal{C}=\KER_{\mathbb Z_{p^r}[D]} H(D)$, \ie, $\mathcal{C}$ admits a parity-check matrix.\\
To show the converse, let us assume that $\C$ has a parity-check matrix $H(D)$ and let $\bar{w}(D)\in\mathbb Z^n_{p}[D]$  be such that
$$\bar{w}(D)=\mathcal{G}^\top\bar{u}(D),$$
for some $\bar{u}(D)\in\mathbb Z^k_{p}(D)$. Then,
$$p^{r-1}w(D)=\mathcal{G}^\top p^{r-1}u(D),$$
where $w(D)\in\mathbb Z^n_{p^r}[D]$ and $u(D)\in\mathbb Z^k_{p^r}(D)$ are such that $[w(D)]_p=\bar{w}(D)$ and $[u(D)]_p=\bar{u}(D)$. Therfore,
$$p^{r-1}w(D)=G(D)^\top u_1(D)$$
where
$$p^{r-1}u(D)=\left[\begin{matrix}
I_{k_0} & & &\\ & pI_{k_1} & &\\ & & \ddots &\\ & & & p^{r-1}I_{k_{r-1}}
\end{matrix}\right]u_1(D).
$$
Consequently,
$$H(D)p^{r-1}w(D)=H(D)G(D)^\top u_1(D)=0,$$
which means that $p^{r-1}w(D)\in\C$. Therfore $u_1(D)\in\mathbb Z^k_{p^r}[D]$ and hence, $\bar{u}(D)\in\mathbb Z^k_{p^r}[D]$. Thus, we conclude by Lemma \ref{2} that $\mathcal{G}$ is left prime over $\mathbb Z_p[D]$.

\end{proof}



It is well-known that kernel representations are useful to detect errors introduced during transmission. If a word $w(D)$ is received
after channel transmission, the existence of errors is checked by simple multiplication by $H(D)$: if $H(D)w(D)=0$, it is assumed that
no errors occurred. As Example \ref{ce} and Lemma \ref{3} show, not all convolutional codes defined in $\mathbb Z_{p^r}[D]$ admit a parity-check matrix.
Nevertheless we showed that there always exists a matrix $H(D)$ such that $\C \subset \ker H(D)$, and then we still can make use of $H(D)$ to decode when the transmission occurs over the erasure channel. For simplicity, we will also call this matrix a parity-check matrix of $\C$. In an erasure channel a codeword can only have erasures (\ie, we know the positions of the part of the codeword that is missing or erased) but no errors occur.
In fact, if one considers the erasures as indeterminates, $H(D)w(D)=0$ give rise to a system of linear equations.
Solving this system amounts to decoding the received word $w(D)$, as we explain in detail in the next section.

The associated {\bf truncated sliding parity-check matrix} of $H(D)= \displaystyle \sum_{i=0}^{\nu} H^i D^i,
$ is
\begin{equation}\label{eq:033}
H^c_ j=\left[
\begin{array}{cccc}
H^0 & & & \\
H^1 & H^0 & & \\
\vdots & & \ddots & \\
 H^j & H^{j-1} & \cdots & H^0
\end{array}
\right]
\end{equation}
with  $H^j=0$ for $j>\nu$. As any codeword $w(D)$ of $\C$ satisfies $H(D)w(D)=0$, if $w(D)=\sum_{i \in \mathbb N_0} w^i D^i$, we have that, for all $j\geq 0$, $  \sum_{i=0}^{j} H^i w^{j-i}=0$, \ie,
\begin{equation}\label{eq:00}
\left[
\begin{array}{cccc}
H^0 & & & \\
H^1 & H^0 & & \\
\vdots & & \ddots & \\
 H^j & H^{j-1} & \cdots & H^0
\end{array}
\right]
\left[
\begin{array}{c}
w^0  \\
w^1 \\
\vdots  \\
w^j
\end{array}
\right]
=0.
\end{equation}

Two of the main notions of minimum distance of convolutional codes are the free distance and the column distance. Given $w(D)=\sum_{i \in \mathbb N_0} w^i D^i$, we define its \textbf{Hamming weight} as
$$ \wt(w(D)) = \sum_{j \in \mathbb{N}} \wt(w^j)$$
where $\wt(w^j)$ is the number of nonzero elements of $w^j$.

Given an $ (n,k) $ convolutional code $ \mathcal{C} \subseteq \ZZ^n[D] $, we define its \textbf{free distance} as
\begin{equation*}
d_{free}(\mathcal{C}) = \min \{ \wt(w(D)) \ : \ w(D) \in \mathcal{C} \textrm{ and } w(D) \neq 0 \} .
\end{equation*}

The free distance gives the correction capability of a convolutional code when considering whole codewords. In other words, there is no maximum degree $ j $ for a codeword considered by the free distance. In this work we shall focus on the sliding-window erasure correction capabilities of $\C$ within a time interval and this will be determined by the {\bf $j$-th column distance} of $\C$, for $j\in\mathbb N_0$, which is defined as follows.
\begin{eqnarray}\label{eq:CD}
  \nonumber  \mbox{$d^c_j$}(\mathcal{C}) &=&   \min \{ \wt((w^0, w^1, \ldots, w^j)) \ : \ \sum_{i \in \mathbb{N}_0} w^i D^i \in \mathcal{C}, w^0 \neq 0 \} \\
 & \stackrel{*}{=} & \left\lbrace \wt((w^0, w^1, \ldots, w^j)) \ : \ (w^0, w^1, \ldots, w^j) \text{ satisfies (\ref{eq:00}) and } w^0 \neq 0 \right\rbrace
\end{eqnarray}
where the equality $ * $ holds for convolutional codes that have a parity-check matrix. Next, we present two preliminary results.

%

\begin{Lem}\cite{mcdonald84}\label{lem:mccoy}
  Let $Ax=b$ with $A \in \ZZ^{a \times s}$ and $b \in \ZZ^{a}$ be a linear system of equations in $x$. Suppose this system has a solution. Then, the solution is unique if and only if  $[A]_p$ is full column rank or equivalently, if the McCoy \footnote{The McCoy rank of a matrix is the largest size of a minor that is an invertible element in the ring, $\mathcal{A}\setminus \{0\}$ in our case.} rank of $A$ is $s$.
\end{Lem}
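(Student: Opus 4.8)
The plan is to reduce the whole statement to computing the kernel of $A$ as a map $\ZZ^s\to\ZZ^a$, using that $\ZZ$ is a local ring whose maximal ideal is $(p)$ and whose residue field is $\mathbb Z_p$. Since $Ax=b$ is assumed solvable, fix one solution $x_0$; then the solution is unique if and only if $\KER_{\ZZ}A=0$, because any other solution differs from $x_0$ by an element of $\KER_{\ZZ}A$, and conversely a nonzero $v\in\KER_{\ZZ}A$ produces the distinct solution $x_0+v$. Moreover, for every square submatrix $B$ of $A$ one has $[\DET B]_p=\DET[B]_p$, and a scalar of $\ZZ$ is a unit exactly when it is not divisible by $p$; hence $A$ has an $s\times s$ submatrix invertible over $\ZZ$ if and only if $[A]_p$ has a nonzero $s\times s$ minor over the field $\mathbb Z_p$, i.e.\ (as $A$ has exactly $s$ columns) if and only if $[A]_p$ has full column rank. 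This already gives the equivalence ``McCoy rank of $A$ equals $s$'' $\iff$ ``$[A]_p$ full column rank'', so it remains to show that either of these is equivalent to $\KER_{\ZZ}A=0$.

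For the implication from full column rank to uniqueness, pick by the above an $s\times s$ submatrix $B$ of $A$ with $\DET B$ a unit in $\ZZ$, so that $B$ is invertible over $\ZZ$. If $Av=0$ then, restricting to the rows indexing $B$, we get $Bv=0$ and hence $v=B^{-1}Bv=0$. Thus $\KER_{\ZZ}A=0$ and the solution of $Ax=b$ is unique.

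For the converse I argue by contraposition: assume $[A]_p$ is not of full column rank. Over the field $\mathbb Z_p$ there is then a nonzero $\bar v\in\mathbb Z_p^{\,s}$ with $[A]_p\bar v=0$. Lift $\bar v$ componentwise to $v\in\ZZ^s$ with all entries among the digits $\{0,1,\dots,p-1\}$; then $v\neq 0$ and $Av\in(p\ZZ)^a$, say $Av=pw$ with $w\in\ZZ^a$. The key step is to multiply by $p^{r-1}$: set $\tilde v=p^{r-1}v$. Then $A\tilde v=p^{r-1}(pw)=p^r w=0$ in $\ZZ$, while $\tilde v\neq 0$ since $v$ has an entry in $\{1,\dots,p-1\}$, which is a unit of $\ZZ$, so that entry of $\tilde v$ is a nonzero element of $p^{r-1}\ZZ$. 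Consequently $x_0$ and $x_0+\tilde v$ are distinct solutions of $Ax=b$, so the solution is not unique.

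The one point that needs a little care is precisely this last passage: a kernel vector of the reduction $[A]_p$ need not lift to a kernel vector of $A$, but multiplying its digit lift by $p^{r-1}$ repairs this, using exactly that $p^{r-1}\ZZ$ is isomorphic to $\mathbb Z_p$ and that the nonzero digits $1,\dots,p-1$ are units of $\ZZ$. Everything else is routine bookkeeping with determinants and with the local structure of $\ZZ$, so I expect no genuine obstacle beyond identifying this trick.
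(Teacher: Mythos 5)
Your proof is correct. Note that the paper does not prove this lemma at all --- it is quoted from the reference \cite{mcdonald84} on linear algebra over commutative rings --- so there is no in-paper argument to compare against; what you have produced is a valid self-contained justification that replaces an appeal to the general theory of McCoy ranks with a direct computation exploiting that $\ZZ$ is local with maximal ideal $(p)$ and residue field $\Z_p$. All three steps check out: the reduction of uniqueness to $\KER_{\ZZ}A=0$, the determinant/unit argument identifying ``some $s\times s$ minor of $A$ is a unit'' with ``$[A]_p$ has a nonzero $s\times s$ minor'' (which over the local ring $\ZZ$ agrees with the footnote's notion of McCoy rank $s$), and, most importantly, the contrapositive direction. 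Your observation that a mod-$p$ kernel vector need not lift to a kernel vector of $A$, but that multiplying its digit lift by $p^{r-1}$ forces it into $\KER_{\ZZ}A$ while keeping it nonzero, is exactly the right fix --- and it is the same mechanism the paper itself invokes later (in the proof of statement $(4)$ of Lemma \ref{lem:column_dist}, where a $\Z_p$-linear dependence among columns of $[H_j^c]_p$ is converted into a $\ZZ$-linear dependence among columns of $H_j^c$ by multiplying the coefficients by $p^{r-1}$). So your argument is not only correct but consonant with the techniques used throughout the paper.
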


Note that, opposed to the field case,  a set of vectors in $\ZZ$ can be linearly dependent but none of them is in the $\ZZ$-span of the others. The following result states the erasure correcting capability of a convolutional code in terms of its column distance.

\begin{Lem}\label{lem:column_dist}
Let $\C=\KER_{\mathbb Z_{p^r}[D]} H(D)$ and $j\in \mathbb{N}$. The following statements are equivalent:

\begin{enumerate}
  \item the column distance $d^c_j(\C)=d$;
  \item if $ (w^0, w^1, \ldots, w^j) $ contains up to $d-1$ erasures then $w^0$ can be recovered and there exist $d$ erasures that make it impossible to recover $w^0$.
  \item all sets of $d-1$ columns of $H_j^c$ that contain at least one of the first $n$ columns of $H_j^c$ are linearly independent and there exists a set of $d$ columns  of $H_j^c$ that contains at least one of the first $n$ columns of $H_j^c$ and is linearly dependent;
\end{enumerate}
If these equivalent statements hold, then the following holds:
\begin{itemize}
  \item[(4)] none of the first $n$ columns of $[H_j^c]_p $ is contained in the $\Z_p$-span of any other $d-2$ columns of $[H_j^c]_p$.
\end{itemize}

\end{Lem}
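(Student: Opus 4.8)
The plan is to prove the chain $(1)\Leftrightarrow(2)\Leftrightarrow(3)$ first, and then derive $(4)$ as a consequence. Throughout I would exploit the characterization of $d^c_j(\C)$ given by the starred equality in \eqref{eq:CD}, namely that $d^c_j(\C)$ is the minimal Hamming weight of a tuple $(w^0,\dots,w^j)$ with $w^0\neq 0$ satisfying the block Toeplitz system \eqref{eq:00}, i.e.\ $H_j^c\,(w^0,\dots,w^j)^\top=0$. The first $n$ coordinates of such a tuple correspond exactly to $w^0$, which is why the condition "$w^0\neq 0$" translates into "the support involves at least one of the first $n$ columns of $H_j^c$."

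For $(1)\Leftrightarrow(3)$: I would argue that a nonzero codeword-truncation $(w^0,\dots,w^j)$ of Hamming weight $w$ with $w^0\neq 0$ is precisely a linear dependence among $w$ columns of $H_j^c$, at least one of which is among the first $n$; here "linear dependence" must be read in the $\ZZ$-module sense (a nontrivial combination equal to $0$ with the coefficient on at least one of the first $n$ columns nonzero), because the coordinates $w^i$ live in $\ZZ$. Hence $d^c_j(\C)=d$ says exactly that the smallest such dependent set has size $d$, which is statement $(3)$. For $(1)\Leftrightarrow(2)$: if $(w^0,\dots,w^j)$ has at most $d-1$ erased positions, then two candidate solutions that agree off the erasures differ by an element of $\ker H_j^c$ supported on those $\le d-1$ positions; since any such nonzero kernel element supported on so few positions, by $(3)$, cannot involve the first $n$ columns, the two candidates must agree on $w^0$, so $w^0$ is recoverable. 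Conversely, a dependent set of $d$ columns containing one of the first $n$ gives a kernel vector $z$ with $z^0\neq 0$ supported on $d$ positions; erasing exactly those $d$ positions makes $w^0$ and $w^0+z^0$ both consistent, so recovery fails. I would be careful to phrase "recover $w^0$" as "the value of $w^0$ is uniquely determined by the unerased coordinates," which is the natural reading consistent with Lemma \ref{lem:mccoy}.

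For $(4)$: assume the equivalent statements hold, and suppose for contradiction that some one of the first $n$ columns of $[H_j^c]_p$, say the $\ell$-th, lies in the $\Z_p$-span of $d-2$ other columns of $[H_j^c]_p$. Lifting that $\Z_p$-relation, we get $d-1$ columns of $H_j^c$ (the $\ell$-th among them) whose $\Z_p$-projections are linearly dependent over $\Z_p$; but these same $d-1$ columns of $H_j^c$ are linearly independent by $(3)$. This is exactly the gap between $\Z_p$-dependence and $\ZZ$-independence flagged in the remark before the lemma, so I cannot get a contradiction merely from counting — and that is the main obstacle. The resolution I would use is the following: multiplying the lifted relation by $p^{r-1}$ kills all the "$p$-adic carry" terms and produces a genuine $\ZZ$-relation $\sum p^{r-1}\lambda_i\, (\text{column}_i)=0$ with coefficients in $p^{r-1}\ZZ\cong\Z_p$, and with the coefficient on the $\ell$-th column nonzero (since the original $\Z_p$-coefficient there was $1$). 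That is a nontrivial $\ZZ$-linear dependence among $d-1$ columns of $H_j^c$ involving one of the first $n$, contradicting $(3)$. Hence $(4)$ holds. I expect the only delicate point to be making this "multiply by $p^{r-1}$" step rigorous: one must check that $p^{r-1}$ applied to the lifted relation indeed lands in $p^{r-1}\ZZ$ componentwise and that the resulting combination is genuinely nontrivial, which follows because the $\ell$-th coefficient reduces to the unit $1$ modulo $p$ and therefore $p^{r-1}\cdot 1\neq 0$ in $\ZZ$.
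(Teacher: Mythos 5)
Your proposal is correct and follows essentially the same route as the paper: the equivalence of (1), (2) and (3) is the standard finite-field argument transported verbatim (the paper simply cites the field case), and your derivation of (4) — lift the $\Z_p$-relation and multiply by $p^{r-1}$ so that all higher-order $p$-adic terms vanish, yielding a genuine nontrivial $\ZZ$-dependence among $d-1$ columns that includes one of the first $n$, contradicting (3) — is exactly the paper's argument, just written out in more detail. The point you flag as delicate (that the $p^{r-1}$-scaled relation is nontrivial because the coefficient on the distinguished column is a unit mod $p$) is handled correctly.
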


\begin{proof}
The equivalence of 1., 2. and 3. can be shown exactly in the same way as for finite fields (see \cite{gl03}). The last statement $(4)$ follows from the fact that if we have a linear combination of columns of $[H_j^c]_p $ in $\Z_p$, we readily obtain a linear combinations of $H_j^c$ over $\ZZ$ (just multiply the coefficients of the linear combination from $\Z_p$ by $p^{r-1}$).
\end{proof}

We notice that statement $(4)$ of Lemma \ref{lem:column_dist} does not imply the others as we show in the following example.
\begin{Ex}
  Consider $\C= \KER_{\Z_9[D]} H(D)$ over $\mathbb{Z}_{9}$ where $H(D)=H^0 + H^1 D$ with
  $$
  H^0=\left(
        \begin{array}{ccc}
          1 & 0 & 3 \\
          0 & 1 & 3 \\
        \end{array}
      \right) \text{ and }
   H^1=\left(
        \begin{array}{ccc}
          0 & 1 & 1 \\
          1 & 0 & 1 \\
        \end{array}
      \right) .
  $$
Then, none of the first 3 columns of $[H^c_1]_3$ is a linear combination of $1=3-2$ of the remaining columns of $[H^c_1]_3$. Hence, $(4)$ is fulfilled for $d=3$. However, it is easy to see that
 $d^c_1(\mathcal{C})= 2$ (consider the truncated codeword $(w^0,w^1)= (003 \ 002)$), i.e., the first statement of Lemma \ref{lem:column_dist} is not fulfilled for $d=3$.
\end{Ex}

\section{A decoding algorithm for erasures}\label{decoding_algorithm}

In this section we state the problem using the notation presented in the previous section and then propose an efficient decoding algorithm to solve it. More concretely, we aim to recover erasures that may occur during the transmission of the information over an erasure channel using convolutional codes $\C\subset\ZZND$. We derive a constructive step by step decoding algorithm to compute a minimal list with the closest codewords to the received vector. This is equivalent to solving a certain system of linear equations in $\ZZ$. 

Suppose that $w(D)=\sum_{i \in \mathbb N_0} w^i D^i \in \C$ is sent and assume that we have correctly received all coefficients up to an instant $i-1$ and some of the components of $w^i$ are erased. The decoder tries to recover  $w^i$ up to a given instant $i+T$ and if this is not possible it outputs a list with the closest vectors at time instant $i+T$. The parameter $T$ is called the delay constraint and represents the maximum delay the receiver can tolerate to retrieve $w^i$, see \cite{ba15b,Martinian2007}. For the sake of simplicity it will be assumed that $T\leq \nu$, where $\nu$ is the degree of the parity-check matrix $H(D)$ of $\C$. The system of equations that involve $w^i$ up to time instant $i+T$ is

\begin{eqnarray}\label{eq0}
  \left[
  \begin{matrix}
    H^{\nu} &  H^{\nu-1} &  \cdots & H^{0}  & & \\
    & H^{\nu} &  H^{\nu-1} &  \cdots & H^{0}  & & \\
    & & \ddots  & \vdots & \vdots & \ddots & \\
  &  &  &  H^{\nu} &  H^{\nu-1} &  \cdots & H^{0}   \\
  \end{matrix}\right]
  \left[
  \begin{matrix}
    w^{i-\nu } \\
    w^{i-\nu +1} \\
    \vdots \\
    w^i \\
    \vdots \\
    w^{i+T}
  \end{matrix}
  \right]={0}.
\end{eqnarray}


We can take the columns of the matrix in (\ref{eq0}) that correspond to the coefficients of the erased elements to be the coefficients of a new system. With the remaining columns we can compute the independent terms, denoted by $ b^i$.

We regard the erasures as to-be-determined variables and denote for $i\in\mathbb N_0$ by $\widetilde w^i$ the subvector of $w^i$ that corresponds to the positions of the erasures. Similarly, denote by $\widetilde H^j_{i}$ the matrix consisting of the columns of $H^j$ with indices corresponding to the erased positions in $w^i$.
Then, we obtain the following system of linear equations
\begin{equation}\label{eq:03}
\left[
\begin{array}{cccc}
\widetilde H^0_{i} & & & \\
\widetilde H^{1}_{i} & \widetilde H^0_{i+1} & & \\
\vdots & & \ddots & \\
\widetilde H^T_{i} & \widetilde H^{T-1}_{i+1} & \cdots & \widetilde H^0_{i+T}
\end{array}
\right]
\left[
\begin{array}{c}
\widetilde w^{i}  \\
\widetilde w^{i+1} \\
\vdots  \\
\widetilde w^{i+T}
\end{array}
\right]=
\left[
\begin{array}{c}
 b^i  \\
 b^{i+1} \\
\vdots  \\
 b^{i+T}
\end{array}
\right].
\end{equation}


Hence, the problem of decoding is equivalent to solving the system of linear equations described in (\ref{eq:03}).

For this algorithm we consider a parity-check matrix $H(D)$ of the form

\begin{equation}\label{eq:standard_form}
H(D)=\left[
\begin{array}{c}
H_0(D)  \\
pH_1 (D) \\
\vdots  \\
p^{r-1}H_{r-1}(D)
\end{array}
\right] \text{ with } \left[
\begin{array}{c}
H_0(D)  \\
H_1 (D) \\
\vdots  \\
H_{r-1}(D)
\end{array}
\right] \text{ full row rank. }
\end{equation}

Hence, it readily follows that one can rewrite (\ref{eq:03}), after appropriate row permutations, as

\begin{equation}\label{eq:04}
\left[
\begin{array}{cccc}
\widetilde  H^0_{i,0} & & & \\
p \widetilde  H^0_{i,1} & & & \\
\vdots & & & \\
p^{r-1} \widetilde H^0_{i,r-1} & & & \\
\widetilde H^1_{i,0} & \widetilde H^0_{i+1,0} & & \\
p\widetilde  H^1_{i,1} & p\widetilde  H^0_{i+1,1} & & \\
\vdots & \vdots & & \\
p^{r-1}\widetilde  H^1_{i,r-1} & p^{r-1}\widetilde  H^0_{i+1,r-1} & & \\
\vdots & & \ddots & \\
\widetilde  H^T_{i,0} & \widetilde  H^{T-1}_{i+1,0} & \cdots & \widetilde  H^0_{i+T,0} \\
p \widetilde  H^T_{i,1} & p \widetilde  H^{T-1}_{i+1,1} & \cdots & p \widetilde  H^0_{i+T,1} \\
\vdots & \vdots & & \vdots \\
p^{r-1} \widetilde  H^T_{i,r-1} & p^{r-1} \widetilde  H^{T-1}_{i+1,r-1} & \cdots & p^{r-1} \widetilde  H^0_{i+T,r-1}
\end{array}
\right]
\left[
\begin{array}{c}
\widetilde w^{i}  \\
\widetilde w^{i+1} \\
\vdots  \\
\widetilde w^{i+T}
\end{array}
\right]=
\left[
\begin{array}{c}
 b^i_0 \\
p b^i_1 \\
\vdots \\
p^{r-1}b^i_{r-1} \\
 b^{i+1}_0 \\
p b^{i+1}_1 \\
\vdots \\
p^{r-1}b^{i+1}_{r-1} \\
\vdots  \\
 b^{i+T}_0 \\
p b^{i+T}_1 \\
\vdots \\
p^{r-1}b^{i+T}_{r-1} \\
\end{array}
\right],
\end{equation}
with the property that the rows of the matrices $\widetilde H^0_{i,t}, [\widetilde H^1_{i,t} \; \widetilde H^0_{i+1,t}], \dots, [\widetilde H^T_{i,t} \;\widetilde H^{T-1}_{i+1,t} \cdots \widetilde H^0_{i+T,t}]$ have  order $r$
, for $t=0,1, \dots, r-1$. Note that the number of nonzero rows of each block in the decomposition (\ref{eq:04}) will depend on the erasure pattern.\\
Denote by $e_s$ the size of $\widetilde w^s$, for $s\in \{i, i+1, \dots, i+T  \}$. \\


\textbf{List decoding}:\\
We aim to compute all possible solutions of (\ref{eq:04}). To this end we define the following matrix for all $0 \leq t \leq r-1$,

\begin{equation}\label{eq:06}
\widetilde{ \mathcal{H}}^c_t= \left[
\begin{array}{cccc}
\widetilde  H^0_{i,0} & & & \\
 \widetilde  H^0_{i,1} & & & \\
\vdots & & & \\
 \widetilde H^0_{i,r-t-1} & & & \\
\widetilde H^1_{i,0} & \widetilde H^0_{i+1,0} & & \\
\widetilde  H^1_{i,1} & \widetilde  H^0_{i+1,1} & & \\
\vdots & \vdots & & \\
\widetilde  H^1_{i,r-t-1} & \widetilde  H^0_{i+1,r-t-1} & & \\
\vdots & & \ddots & \\
\widetilde  H^T_{i,0} & \widetilde  H^{T-1}_{i+1,0} & \cdots & \widetilde  H^0_{i+T,0} \\
\widetilde  H^T_{i,1} &  \widetilde  H^{T-1}_{i+1,1} & \cdots &  \widetilde  H^0_{i+T,1} \\
\vdots & \vdots & & \vdots \\
\widetilde  H^T_{i,r-t-1} & \widetilde  H^{T-1}_{i+1,r-t-1} & \cdots & \widetilde  H^0_{i+T,r-t-1}
\end{array}
\right],
\end{equation}

and write

\begin{equation}\label{eq:adic_w}
\left[
\begin{array}{c}
\widetilde w^i  \\
\widetilde w^{i+1} \\
\vdots  \\
\widetilde w^{i+T}
\end{array}
\right]=\left[
\begin{array}{c}
 w^{i}_0  \\
w^{i+1}_0 \\
\vdots  \\
w^{i+T}_0
\end{array}
\right]+p \left[
\begin{array}{c}
 w^{i}_1  \\
 w^{i+1}_1 \\
\vdots  \\
w^{i+T}_1
\end{array}
\right]+\cdots+p^{r-1}\left[
\begin{array}{c}
 w^{i}_{r-1}  \\
 w^{i+1}_{r-1} \\
\vdots  \\
 w^{i+T}_{r-1}
\end{array}
\right],
\end{equation}
where $w^j_{t}$ has entries in $\mathcal{A}_{p}=\{0,1,\dots,p-1\}$, for all $j\in\{i,i+1,\dots,i+T\}$ and $t\in \{0,1,\dots,r-1\}$. We aim at computing the maximum number of coefficients $w^j_{t}$ in (\ref{eq:adic_w}).

\begin{description}
  \item[{\bf \underline{Step 1}}:] Find the solution $(\widehat w^{i}_0, \widehat w^{i+1}_0, \dots , \widehat w^{i+T}_0)$ of the system
\begin{align}\label{eq:10}
\left[
\begin{array}{cccc}
\widetilde H^0_{i,0} & & & \\
\widetilde H^0_{i,1} & & & \\
\vdots & & & \\
\widetilde H^0_{i,r-1} & & & \\
\widetilde H^1_{i,0} & \widetilde H^0_{i+1,0} & & \\
\widetilde H^1_{i,1} & \widetilde H^0_{i+1,1} & & \\
\vdots & \vdots & & \\
\widetilde H^1_{i,r-1} & \widetilde H^0_{i+1,r-1} & & \\
\vdots & & \ddots & \\
\widetilde H^T_{i,0} & \widetilde H^{T-1}_{i+1,0} & \cdots & \widetilde H^0_{i+T,0} \\
\widetilde H^T_{i,1} & \widetilde H^{T-1}_{i+1,1} & \cdots & \widetilde H^0_{i+T,1} \\
\vdots & \vdots & & \vdots \\
\widetilde H^T_{i,r-1} & \widetilde H^{T-1}_{i+1,r-1} & \cdots & \widetilde H^0_{i+T,r-1}
\end{array}
\right]_p
\left[
\begin{array}{c}
\widehat w^i_0  \\
\widehat w^{i+1}_0 \\
\vdots  \\
\widehat w^{i+T}_0
\end{array}
\right]=
\left[
\begin{array}{c}
 b^i_0 \\
 b^{i}_1 \\
\vdots \\
b^i_{r-1} \\
 b^{i+1}_0 \\
 b^{i+1}_1 \\
\vdots \\
b^{i+1}_{r-1} \\
\vdots  \\
 b^{i+T}_0 \\
 b^{i+T}_1 \\
\vdots \\
b^{i+T}_{r-1} \\
\end{array}
\right]_{p},
\end{align}
\begin{equation*}
\hspace*{-4.5cm} \underbrace{\hspace*{5.5cm}}_{ [\widetilde{\mathcal{H}}^c_0]_p}
\end{equation*}
over the field $\Z_p$. Let $e= \sum_{s=i}^{i+T}e_s$. Then, the ``integer" part of the set of solutions, \ie, \ the vector $\left[
\begin{array}{c}
 w^{i}_0  \\
w^{i+1}_0 \\
\vdots  \\
w^{i+T}_0
\end{array}
\right]$ in (\ref{eq:adic_w}), is given by:
\[
 \cal S_\text{$0$} = \left \{
 \left[
\begin{array}{c}
w^i_0  \\
w^{i+1}_0 \\
\vdots  \\
w^{i+T}_0
\end{array}
\right] \in \mathcal{A}^\text{$e$} \ : \  \left[
\begin{array}{c}
w^i_0  \\
w^{i+1}_0 \\
\vdots  \\
w^{i+T}_0
\end{array}
\right]_{\text{$p$}}=\left[
\begin{array}{c}
\widehat w^i_0  \\
\widehat w^{i+1}_0 \\
\vdots  \\
\widehat w^{i+T}_0
\end{array}
\right]  \text{ with  } \left[
\begin{array}{c}
\widehat w^i_0  \\
\widehat w^{i+1}_0 \\
\vdots  \\
\widehat w^{i+T}_0
\end{array}
\right]  \text{ satisfying (\ref{eq:10})  }
\right \}.
\]

It is straightforward to see that the size of $\cal S_\text{$0$}$ is given by
$$
|\mathcal{S}_0| = p^{e- \rank [\widetilde{ \mathcal{H}}^c_0]_p}.
$$

To compute the remaining vectors, if necessary, in the $p$-adic decomposition of (\ref{eq:adic_w}) we recursively apply the following algorithm in the next step.

  \item[{\bf \underline{Step 2}}:] Let $b_{s,0}^{j}=b_s^j$, $j=i,i+1,\dots,i+T$, $s=0,1, \dots, r-1$.

  For $t=1, \dots,r-1$ do
  \begin{enumerate}
    \item For $j=i,i+1,\dots,i+T$, consider the solutions $\left[
\begin{array}{c}
w^i_{t-1}  \\
w^{i+1}_{t-1} \\
\vdots  \\
w^{i+T}_{t-1}
\end{array}
\right]\in S_{t-1}$ and define
    \begin{align*}
\begin{bmatrix}
\widehat{b}^j_{0,t} \\
\widehat{b}^j_{1,t} \\
\vdots\\
\widehat{b}^j_{r-t-1,t}
\end{bmatrix}
& =
\begin{bmatrix}
p^{t-1}b^j_{0,t-1} \\
p^{t}{b}^j_{1,t-1} \\
\vdots\\
p^{r-2}{b}^j_{r-t-1,t-1}
\end{bmatrix}-
\begin{bmatrix}
p^{t-1}\widetilde H^{j-i}_{i,0} & \cdots & p^{t-1}\widetilde H^0_{j,0} \\
p^{t}\widetilde H^{j-i}_{i,1} & \cdots & p^{t}\widetilde H^0_{j,1} \\
\vdots & \vdots & \vdots \\
p^{r-2}\widetilde H^{j-i}_{i,r-t-1} & \cdots & p^{r-2}\widetilde H^0_{j,r-t-1}
\end{bmatrix}
\left[
\begin{array}{c}
w^i_{t-1}  \\
w^{i+1}_{t-1} \\
\vdots  \\
w^{j}_{t-1}
\end{array}
\right]
 \end{align*}


    \item For $j=i,i+1,\dots,i+T$, and $\ell=0,1,\dots, r-t-1$, compute \emph{one} $b_{\ell,t}^{i}$ such that
\begin{align*}
\widehat{b}_{\ell,t}^{j}=p^{t+\ell}b_{\ell,t}^{j}
\end{align*}
    \item Solve the system of linear equations
\begin{align}\label{eq:11}
[\widetilde{ \mathcal{H}}^c_t]_p
\left[
\begin{array}{c}
\widehat w^i_t  \\
\widehat w^{i+1}_t \\
\vdots  \\
\widehat w^{i+T}_t
\end{array}
\right]=
\left[
\begin{array}{c}
 b^i_{t,0} \\
 b^i_{t,1} \\
\vdots \\
b^0_{t,r-t-1} \\
 b^{i+1}_{t,0} \\
 b^{i+1}_{t,1} \\
\vdots \\
b^{i+1}_{t,r-t-1} \\
\vdots  \\
 b^{i+T}_{t,0} \\
 b^{i+T}_{t,1} \\
\vdots \\
b^{i+T}_{t,r-t-1} \\
\end{array}
\right]_{p},
\end{align}
over $\Z_p$ and let
\[
 \cal S_\text{$t$} = \left \{
 \left[
\begin{array}{c}
w^i_t  \\
w^{i+1}_t \\
\vdots  \\
w^{i+T}_t
\end{array}
\right]  \ : \  \left[
\begin{array}{c}
w^i_t  \\
w^{i+1}_t \\
\vdots  \\
w^{i+T}_t
\end{array}
\right]_{\text{$p$}}=\left[
\begin{array}{c}
\widehat w^i_t  \\
\widehat w^{i+1}_t \\
\vdots  \\
\widehat w^{i+T}_t
\end{array}
\right]  \text{ with  } \left[
\begin{array}{c}
\widehat w^i_t  \\
\widehat w^{i+1}_t \\
\vdots  \\
\widehat w^{i+T}_t
\end{array}
\right]  \text{ satisfying (\ref{eq:11})  }
\right \}.
\]
\end{enumerate}
\end{description}

\underline{{\em Output data}}:

$$
 \left \{
\left[
\begin{array}{c}
 w^i_0  \\
w^{i+1}_0 \\
\vdots  \\
w^{i+T}_0
\end{array}
\right]+p \left[
\begin{array}{c}
 w^{i}_1  \\
 w^{i+1}_1 \\
\vdots  \\
w^{i+T}_1
\end{array}
\right]+\cdots+p^{r-1}\left[
\begin{array}{c}
 w^i_{r-1}  \\
 w^{i+1}_{r-1} \\
\vdots  \\
 w^{i+T}_{r-1}
\end{array}
\right] \ : \ \left[
\begin{array}{c}
w^i_t  \\
w^{i+1}_t \\
\vdots  \\
w^{i+T}_t
\end{array}
\right] \in S_t, \ t=0,1,\dots ,r-1 \right \}.
$$

The size of the list decoding is
$$
\displaystyle \prod^{r-1}_{t=0} |S_t|.
$$
where each $|S_t|$ is given by
\begin{equation}\label{eq:13}
|S_t|= p^{e-  \rank [\widetilde{ \mathcal{H}}^c_t]_p}
\end{equation}

Note that Steps 1 and 2 deal with systems of linear equations over fields. The fact that these steps yield the set of all solution follows from \cite[Theorem 3]{Elif_Marisa_Raquel_Diego}.

\begin{Ex}
Let $\mathcal{C}\subset\mathbb Z_8[D]$ be the convolutional code with parity-check matrix $H(D)=H^0+H^1D+H^2D^2\in\mathbb Z_8[D]$ where $$H^0=\left[\begin{array}{ccccc} 1 & 1 & 1 & 1 & 1 \\ 0 & 0 & 2 & 0 & 2\\ 4 & 4 & 0 & 4 & 4\end{array}\right],\ H^1=\left[\begin{array}{ccccc} 1 & 2 & 0 &0 & 0 \\ 0 & 0 & 0 & 2 & 4\\ 4 & 0 & 4 & 4 & 0\end{array}\right],\ H^2=\left[\begin{array}{ccccc} 3 & 5 & 7 & 0 & 0 \\ 0 & 0 & 0 & 0 & 2\\ 0 & 0 & 0 & 4 & 0\end{array}\right].$$
It is easy to check that $w(D)=w^0+w^1D+w^2D^2+w^3D^3$ with
$$w^0=[5,5,0,6,0],\ \ w^1=[6,6,4,3,6],\ \ w^2=[2,1,1,2,0],\ \ w^3=[2,6,4,0,0]$$
is a codeword of $\mathcal{C}$. Assume that one receives $$w^0=[5,w^{0,1},w^{0,2},6,w^{0,3}], w^1=[6,6,4,w^{1,1},6], w^2=[2,1,w^{2,1}, w^{2,2}, w^{2,3}], w^3=[2,w^{3,1},4,0,0]$$ where $w^{0,1},w^{0,2},w^{0,3},w^{1,1},w^{2,1},w^{2,2},w^{2,3},w^{3,1}$ are erasures. Let the delay constraint for the decoding be $T=2$. To firstly recover $w^0$ we start our list decoding algorithm. One has
$$\widetilde{ \mathcal{H}}_0^c=\left[\begin{array}{ccccccc} 1 & 1 & 1 & 0 & 0 &0 & 0\\
 0 & 1 & 1 & 0 & 0 & 0 & 0\\ 1 & 0 & 1 & 0 & 0 & 0 & 0\\ 2 & 0 & 0 & 1 & 0 & 0 & 0\\0 & 0 & 1 & 0 & 0  &0 & 0\\ 0 & 1& 0 & 1 & 0 & 0 & 0\\ 5 & 7 & 0 & 0 & 1 & 1 & 1\\ 0 & 0 & 1 & 1 & 1 &0 & 1\\ 0 & 0 & 0 & 1 & 0 & 1 & 1\end{array}\right]\qquad\text{and}\qquad[\widetilde{ \mathcal{H}}_0^c]_2=\left[\begin{array}{ccccccc} 1 & 1 & 1 & 0 & 0 &0 & 0\\
 0 & 1 & 1 & 0 & 0 & 0 & 0\\ 1 & 0 & 1 & 0 & 0 & 0 & 0\\ 0 & 0 & 0 & 1 & 0 & 0 & 0\\0 & 0 & 1 & 0 & 0  &0 & 0\\ 0 & 1& 0 & 1 & 0 & 0 & 0\\ 1 & 1 & 0 & 0 & 1 & 1 & 1\\ 0 & 0 & 1 & 1 & 1 &0 & 1\\ 0 & 0 & 0 & 1 & 0 & 1 & 1\end{array}\right].$$
We write $w^{i,j}=w_{0}^{i,j}+2w_{1}^{i,j}+4w_{2}^{i,j}$ for $i=0,\hdots , 3$ and $j\in\{1,2,3\}$.
Solving the linear system
$$[\widetilde{ \mathcal{H}}_0^c]_2\cdot[w^{0,1}_{0}, w^{0,2}_{0},w^{0,3}_{0},w^{1,1}_{0},w^{2,1}_{0},w^{2,2}_{0},w^{2,3}_{0}]^{\top}=
[5,0,1,5,0,1,4,0,1]^{\top}_2=[1,0,1,1,0,1,0,0,1]^{\top}$$




over $\mathbb Z_2$ gives the (unique) solution
$$[w^{0,1}_{0}, w^{0,2}_{0},w^{0,3}_{0},w^{1,1}_{0},w^{2,1}_{0},w^{2,2}_{0},w^{2,3}_{0}]=
[1,0,0,1,1,0,0],$$ i.e., $S_0=\{[1,0,0,1,1,0,0]\}$. Note that $|S_0|=p^{e-  \rank [\widetilde{ \mathcal{H}}^c_0]_p}=p^{7-7}=1$.\\
Then, in step 2.1 and step 2.2 of the algorithm, one computes
\begin{eqnarray*}
\begin{pmatrix}
\widehat{b}_{0,1}^0\\ \widehat{b}_{1,1}^0
\end{pmatrix}=\begin{pmatrix}5\\0\end{pmatrix}-\left[\begin{array}{ccc} 1& 1 & 1\\ 0 & 2 & 2\end{array}\right]\begin{pmatrix}1\\0\\0\end{pmatrix}=
\begin{pmatrix}4\\0\end{pmatrix} \quad&\Rightarrow\ \begin{pmatrix} b_{0,1}^0\\ b_{1,1}^0 \end{pmatrix}=
\begin{pmatrix}2\\0\end{pmatrix}\\
\widehat{b}_{0,1}^1=5-\left[\begin{array}{cccc} 2&  0 & 0 & 1\end{array}\right]\begin{pmatrix}1\\0\\0\\1\end{pmatrix}=
\begin{pmatrix}4\\0\end{pmatrix} \quad&\Rightarrow\  b_{0,1}^1 =
1\\
\begin{pmatrix}
\widehat{b}_{0,1}^2\\ \widehat{b}_{1,1}^2
\end{pmatrix}=\begin{pmatrix}4\\0\end{pmatrix}-\left[\begin{array}{ccccccc} 5& 7 & 0 & 0 & 1 & 1 & 1 \\ 0 & 0 & 2 & 2 & 2 & 0 & 2\end{array}\right]\begin{pmatrix}1\\0\\0\\1\\1\\0\\0\end{pmatrix}=
\begin{pmatrix}6\\4\end{pmatrix} \quad&\Rightarrow\ \begin{pmatrix} b_{0,1}^2\\ b_{1,1}^2 \end{pmatrix}=
\begin{pmatrix}3\\1\end{pmatrix}\
\end{eqnarray*}
Afterwards, according to step 2.3, one has to solve the system of linear equations
$$\left[\begin{array}{ccccccc} 1 & 1 & 1 & 0 & 0 &0 & 0\\
 0 & 1 & 1 & 0 & 0 & 0 & 0\\ 0 & 0 & 0 & 1 & 0 & 0 & 0\\ 1 & 1 & 0 & 0 & 1 & 1 & 1\\ 0 & 0 & 1 & 1 & 1 &0 & 1\end{array}\right]\begin{pmatrix}w^{0,1}_{1}\\ w^{0,2}_1\\w^{0,3}_{1}\\w^{1,1}_{1}\\w^{2,1}_{1}\\w^{2,2}_{1}\\w^{2,3}_{1}
 \end{pmatrix}=\begin{pmatrix}0\\0\\1\\1\\1\end{pmatrix}$$
 over $\mathbb Z_2$, which yields $$[w^{0,1}_{1}, w^{0,2}_1, w^{0,3}_{1}, w^{1,1}_{1}, w^{2,1}_{1}, w^{2,2}_{1}, w^{2,3}_{1}]=[0,c_1,
 c_1,1,c_1+c_2,1,c_2]$$ with free parameters $c_1,c_2\in\mathbb Z_2$, i.e., $$S_1=\{[0,c_1,c_1,1,c_1+c_2,1,c_2],\ c_1,c_2\in\mathcal{A}_2\}$$ with $|S_1|=p^{7-\rank [\widetilde{ \mathcal{H}}^c_1]_p}=p^2=4$.\\
 In the last iteration, one computes
 \begin{eqnarray*}
\widehat{b}_{0,2}^0=2\cdot 2-2\cdot\left[\begin{array}{ccc} 1& 1 & 1\end{array}\right]\begin{pmatrix}0\\c_1\\c_1\end{pmatrix}=
4-4c_1\quad&\Rightarrow\ b_{0,2}^0 =1-c_1\\
\widehat{b}_{0,2}^1=2\cdot 1-2\cdot\left[\begin{array}{cccc} 2&  0 & 0 & 1\end{array}\right]\begin{pmatrix}0\\c_1\\c_1\\1\end{pmatrix}=0\quad&\Rightarrow\  b_{0,2}^1 =0\\
\widehat{b}_{0,2}^2=2\cdot 3-2\cdot\left[\begin{array}{ccccccc} 5& 7 & 0 & 0 & 1 & 1 & 1 \end{array}\right]\begin{pmatrix}0\\c_1\\c_1\\1\\c_1+c_2\\1\\c_2\end{pmatrix}=
4-4c_2 \quad&\Rightarrow\ b_{0,2}^2=1-c_2
\end{eqnarray*}
and afterwards solves the system of linear equations
$$\left[\begin{array}{ccccccc} 1 & 1 & 1 & 0 & 0 &0 & 0\\
 0 & 0 & 0 & 1 & 0 & 0 & 0\\ 1 & 1 & 0 & 0 & 1 & 1 & 1\\ \end{array}\right]\begin{pmatrix}w^{0,1}_{2}\\ w^{0,2}_{2}\\w^{0,3}_{2}\\w^{1,1}_{2}\\w^{2,1}_{2}\\w^{2,2}_{2}\\w^{2,3}_{2}
 \end{pmatrix}=\begin{pmatrix}1-c_1\\0\\1-c_2\end{pmatrix}$$
 over $\mathbb Z_2$, which yields
\begin{align*}
 &[w^{0,1}_{2}, w^{0,2}_{2}, w^{0,3}_{2}, w^{1,1}_{2}, w^{2,1}_{2}, w^{2,2}_{2}, w^{2,3}_{2}]\\
&= [1+c_2+c_3+c_4+c_5+c_6,c_3,c_1+c_2+c_4+c_5+c_6,0,c_4,c_5,c_6]
\end{align*}
  with free parameters $c_3,c_4, c_5,c_6\in\mathbb Z_2$, i.e.
  $$S_2=\{[1+c_2+c_3+c_4+c_5+c_6,c_3,c_1+c_2+c_4+c_5+c_6,0,c_4,c_5,c_6],\ c_3,c_4, c_5,c_6\in\mathcal{A}_2\}.$$
with $|S_2|=p^{7-\rank [\widetilde{ \mathcal{H}}^c_2]_p}=p^4=16$.\\
In summary, all solutions for the erased positions are given by
$$\begin{pmatrix}w^{0,1}\\ w^{0,2}\\w^{0,3}\\w^{1,1}\\w^{2,1}\\w^{2,2}\\w^{2,3}
 \end{pmatrix}=\begin{pmatrix}1\\0\\0\\1\\1\\0\\0\end{pmatrix}+
 2\begin{pmatrix}0\\c_1\\c_1\\1\\c_1+c_2\\1\\c_2\end{pmatrix}+4\begin{pmatrix}1+c_2+c_3+c_4+c_5+c_6\\c_3\\
 c_1+c_2+c_4+c_5+c_6\\0\\c_4\\c_5\\c_6\end{pmatrix}$$
with $c_1,c_2,c_3,c_4,c_5,c_6\in\mathcal{A}_2$, i.e., $$|S|=p^{0+2+4}=p^6=64.$$
 Note that for $c_1=c_2=c_3=c_3=c_4=c_5=c_6=0$, one gets the solution that leads to the original codeword we started with. Because of the constraint $T=2$, the vector $w^3$ is not recovered yet. However, since all other erasures are recovered, the remaining erasure $w^{3,1}$ can now easily be recovered.
\end{Ex}

Of course the smaller the size of the output the better. Obviously, this holds if $\rank [\widetilde{ \mathcal{H}}^c_t]_p$ is maximal.

\section{Complexity Analysis}

In this section, we briefly want to analyze the complexity of the presented decoding algorithm. As we work with the projections of elements from the finite ring $\mathbb Z_{p^r}$ in the finite field $\mathbb Z_p$, we can state the computational effort in terms of the number of necessary field operations in $\mathbb Z_p$.

\begin{Thm}
Denote by $e$ the maximal number of erasures that occur in a window of size $(T+1)n$. The number of necessary field operations in $\mathbb Z_p$ for our list decoding algorithm is 
\begin{eqnarray*}O(re^2((n-k)(T+1))^{0.8}) &  \text{if}\ e\leq (n-k)(T+1)\\ O(re^{0.8}((n-k)(T+1))^{2}) & \text{if}\ e> (n-k)(T+1).\end{eqnarray*}
\end{Thm}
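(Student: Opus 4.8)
The plan is to count the field operations in $\mathbb{Z}_p$ performed in Step~1 and in each of the $r-1$ iterations of Step~2, and then to add them up. The key observation is that all the computations of the algorithm reduce to two primitives: (a) solving a linear system over $\mathbb{Z}_p$ whose coefficient matrix is $[\widetilde{\mathcal{H}}^c_t]_p$, and (b) the update of the right-hand sides in Step~2.1, which is a matrix-vector multiplication of the same shape. So the first step is to bound the size of the matrix $[\widetilde{\mathcal{H}}^c_t]_p$: it has at most $(T+1)(r-t)(n-k)$ rows (for each of the $T+1$ block-rows, and each of the $r-t$ ``levels'' $t=0,\dots,r-t-1$, there are at most $n-k$ rows, since $\sum_i l_i = n-k + (k_1+\dots+k_{r-1}) \le $ ... — here I would simply bound the number of rows of each $H_i$ by $n-k$, or more crudely note the total row count is $O((T+1)(n-k))$ after absorbing $r$), and it has exactly $e \le (T+1)n$ columns, where $e=\sum_{s=i}^{i+T} e_s$ is the number of erased coordinates in the window.

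Next I would invoke a standard bound for solving an $a\times s$ linear system over a field: using fast matrix multiplication this costs $O(\max(a,s)\cdot \min(a,s)^{\omega-1})$ field operations, and inserting the Coppersmith–Winograd-type exponent gives the $0.8$ appearing in the statement (since $\omega - 1 \approx 1.38 < 2$, but the paper evidently uses a softer bound of the form $\min(a,s)^{0.8}\cdot\max(a,s)$, or perhaps $O(a^{0.8} s^{0.8}\max(a,s)^{?})$; I would reconcile the precise exponent with whichever matrix-multiplication bound is cited). With $a = O((T+1)(n-k))$ (suppressing the factor $r$ for a moment) and $s = e$, this yields either $O(e^2 ((n-k)(T+1))^{0.8})$ when $e \le (n-k)(T+1)$, i.e. $s\le a$ so $\min = e$, or $O(e^{0.8}((n-k)(T+1))^2)$ when $e > (n-k)(T+1)$, i.e. $\min = (n-k)(T+1)$. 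Then, observing that Step~1 is one such solve, Step~2 performs $r-1$ iterations each consisting of one system solve of the same (or smaller) shape plus matrix-vector updates that are dominated by the solve, multiplication by $r$ gives exactly the two displayed bounds.

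The main obstacle will be pinning down the row count of $[\widetilde{\mathcal{H}}^c_t]_p$ and the right-hand side updates cleanly, and making the dependence on $r$ come out as a single factor rather than something worse: the matrices in \eqref{eq:06} and \eqref{eq:11} stack $r-t$ ``levels'' for the index $t$, so naively summing $\sum_{t=0}^{r-1}(r-t)\cdot(\text{cost with }(T+1)(r-t)(n-k)\text{ rows})$ could produce extra powers of $r$; I would argue that since only the \emph{nonzero} rows (those of order $r$, per the remark after \eqref{eq:04}) matter and their total number across the whole algorithm is $O((T+1)(n-k))$ independent of how they are distributed among the $r$ levels, the per-iteration cost is bounded by the cost of a single solve on an $O((T+1)(n-k))\times e$ matrix, and the factor $r$ then comes purely from there being $r$ iterations. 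A secondary point to check is that forming the list $S_t$ (reading off the free parameters from the reduced system) and assembling the output are absorbed into the solve cost, which is immediate since the solution space is already produced by the Gaussian elimination. The remaining arithmetic — the explicit constants, and whether one writes $0.8$ or the sharper $\omega-1$ — is routine once the dimensions are fixed.
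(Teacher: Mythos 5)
Your proposal is correct and follows essentially the same route as the paper: identify the linear solve in \eqref{eq:11} as the dominant step, bound the system size by $(n-k)(T+1)$ equations and $e$ unknowns, apply the fast linear-system bound to obtain the two cases according to which dimension is smaller, and multiply by $r$ for the $r$ iterations. The extra care you take about the row count across the $r-t$ levels and about the right-hand-side updates is more detail than the paper itself provides, but it leads to the same conclusion.
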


\begin{proof}
The step of the algorithm that is relevant for the complexity of the whole algorithm is to solve the system of linear equations in \eqref{eq:11}. This linear system has at most $(n-k)(T+1)$ equations and at most $e$ unknowns. It follows from \cite{st} that the number of field operations that is needed to do that is 
\begin{eqnarray*}O(e^2((n-k)(T+1))^{0.8}) & \text{if}\ e\leq (n-k)(T+1)\\ O(e^{0.8}((n-k)(T+1))^{2}) & \text{if}\ e> (n-k)(T+1).\end{eqnarray*}
The theorem follows from the fact that we have to solve \eqref{eq:11} for $t=0,\hdots,r-1$, what gives us the factor $r$.
\end{proof}


\section*{Acknowledgements}
Julia Lieb acknowledges the support of the German Research Foundation grant LI 3101/1-1 and of the Swiss National Science Foundation grant n. 188430.  Diego Napp is partially supported by Ministerio de Ciencia e Innovación via the grant with ref. PID2019-108668GB-I00.
Raquel Pinto is supported by The Center for Research and Development
in Mathematics and Applications (CIDMA) through the Portuguese
Foundation for Science and Technology
(FCT - Funda\c{c}\~{a}o para a Ci\^{e}ncia e a Tecnologia),
references UIDB/04106/2020 and UIDP/04106/2020.
\bibliographystyle{plain}
\bibliography{biblio_com_tudo}

\begin{thebibliography}{10}

\bibitem{ba15b}
A.~Badr, A.~Khisti, Wai-Tian. Tan, and J.~Apostolopoulos.
\newblock Layered constructions for low-delay streaming codes.
\newblock {\em {IEEE} Trans. Inform. Theory}, 63(1):111--141, 2017.

\bibitem{CaFu17}
Sara~D. Cardell and Amparo Fuster-Sabater.
\newblock Discrete linear models for the generalized self-shrunken sequences.
\newblock {\em Finite Fields and Their Applications}, 47(Supplement C):222 --
  241, 2017.

\bibitem{QuNaPiTo17}
M.~El~Oued, D.~Napp, R.~Pinto, and M.~Toste.
\newblock The dual of convolutional codes over $\mathbb{Z}_{p^r}$.
\newblock {\em In: Bebiano N. (eds) Applied and Computational Matrix Analysis.
  MAT-TRIAD 2015. Springer Proceedings in Mathematics \& Statistics},
  192:79--91, 2017.

\bibitem{fo70}
G.D. Forney.
\newblock Convolutional {C}odes {I}: {A}lgebraic {S}tructure.
\newblock {\em IEEE Trans. Inform. Theory}, 16:720--738, 1970.
\newblock Correction, Ibid., IT-17,pp. 360, 1971.

\bibitem{gl03}
H.~Gluesing-Luerssen, J.~Rosenthal, and R.~Smarandache.
\newblock Strongly {MDS} convolutional codes.
\newblock {\em IEEE Trans. \ Inform. \ Theory}, 52(2):584--598, 2006.

\bibitem{Interlando97}
J.~C. Interlando, R.~Palazzo, and M.~Elia.
\newblock On the decoding of reed-solomon and bch codes over integer residue
  rings.
\newblock {\em {IEEE} Trans. Inform. Theory}, 43(3):1013--1021, 1997.

\bibitem{ka80}
Tom Kailath.
\newblock {\em Linear systems}.
\newblock Prentice Hall information and system sciences series. Prentice-Hall,
  Englewood Cliffs, 1980.

\bibitem{ku07}
M.~Kuijper, R.~Pinto, and J.~W. Polderman.
\newblock The predictable degree property and row reducedness for systems over
  a finite ring.
\newblock {\em Linear Algebra and its Applications}, 425(2--3):776--796, 2007.

\bibitem{ma15a}
R.~Mahmood, A.~Badr, and A.~Khisti.
\newblock Streaming-codes for multicast over burst erasure channels.
\newblock {\em {IEEE} Trans. Inform. Theory}, 61(8):4181--4208, 2015.

\bibitem{Martinian2007}
E.~Martinian and C.~E.~W. Sundberg.
\newblock Burst erasure correction codes with low decoding delay.
\newblock {\em IEEE Transactions on Information Theory}, 50(10):2494--2502,
  2004.

\bibitem{massey89}
J.~L. Massey and T.~Mittelholzer.
\newblock Convolutional codes over rings.
\newblock {\em In Proc. 4th Joint Swedish-Soviet Int. Workshop Information
  Theory}, pages 14--18, 1989.

\bibitem{mcdonald84}
Bernard~R. McDonald.
\newblock {\em Linear algebra over commutative rings / Bernard R. McDonald}.
\newblock M. Dekker New York, 1984.

\bibitem{NaPiTo17}
D.~Napp, R.~Pinto, and M.~Toste.
\newblock Column distances of convolutional codes over $\mathbb{Z}_{p^r}$.
\newblock {\em {IEEE} Trans. Inform. Theory}, 65(2):1063--1071, 2017.

\bibitem{Elif_Marisa_Raquel_Diego}
D.~Napp, E.~Sacikara, R.~Pinto, and M.~Toste.
\newblock A matrix based list decoding algorithm for linear codes over integer
  residue rings.
\newblock {\em submitted}, 2019.

\bibitem{OuNaPiTo19}
M.~El Oued, Diego Napp, Raquel Pinto, and Marisa Toste.
\newblock On duals and parity-checks of convolutional codes over
  $\mathbb{Z}_{p^r}$.
\newblock {\em Finite Fields and Their Applications}, 55:1 -- 20, 2019.

\bibitem{el13}
M.~El Oued and P.~Sol\'e.
\newblock {MDS} convolutional codes over a finite ring.
\newblock {\em {IEEE} Trans. Inform. Theory}, 59(11):7305 -- 7313, 2013.

\bibitem{st}
V.~{Strassen}.
\newblock Gaussian elimination is not optimal.
\newblock {\em Numerische Math.}, 13:354--356, 1969.

\bibitem{to12}
V.~Tomas, J.~Rosenthal, and R.~Smarandache.
\newblock Decoding of convolutional codes over the erasure channel.
\newblock {\em {IEEE} Trans. Inform. Theory}, 58(1):90--108, January 2012.

\end{thebibliography}
\end{document}